\definecolor {processblue}{cmyk}{0.96,0,0,0}
\newcommand{\donotshow}[1]{}
\newcommand{\ignore}[1]{}
\newlength{\setspacing}
\newcommand{\mbegin}{\{\ \ }
\newcommand{\mend}{\}}
\newlength{\mleftindent}
\newlength{\mindent}
\newlength{\mboxwidth}
\newcommand{\mincrement}{\addtolength{\mboxwidth}{-\mindent}}
\newcommand{\mdecrement}{\addtolength{\mboxwidth}{\mindent}}
\newlength{\preprogramskip}
\newlength{\postprogramskip}
\newlength{\mexpwidth}
\newlength{\mexpindent}
\newcommand{\indentafterkeyword}{\hspace*{0.5em}}
\newcommand{\mslifelse}[3]  
{\setlength{\mexpwidth}{\mboxwidth}%
\settowidth{\mexpindent}{{\bf if\indentafterkeyword}}%
\addtolength{\mexpwidth}{-\mexpindent}%
{\bf if\indentafterkeyword}\parbox[t]{\mexpwidth}{#1}\\
\mincrement \mbegin \parbox[t]{\mboxwidth}{#2 \mend} \mdecrement \\
{\bf else} \\
\mincrement \mbegin \parbox[t]{\mboxwidth}{#3}\\
\mend \mdecrement
}
\newlength{\proofpostskipamount}\newlength{\proofpreskipamount}
\newlength{\mydefwidth}
\newlength{\mytextwidth}
\newcommand{\myurl}[1]{{\footnotesize \url{#1}}}
\newcommand{\DM}{Duan-Mehlhorn:Arrow-Debreu-Market}
\newcommand{\DGM}{Duan-Garg-Mehlhorn}
\newcommand{\quotes}[1]{``#1''}
\newtheorem{theorem}{Theorem}[section]
\newtheorem{lemma}[theorem]{Lemma}
\begin{document}

\begin{frontmatter}
\title{Improved Balanced Flow Computation Using Parametric Flow}
\author[MPI]{Omar Darwish\corref{cor1}}
\ead{odarwish@mpi-inf.mpg.de}
\author[MPI]{Kurt Mehlhorn}
\ead{mehlhorn@mpi-inf.mpg.de}
\cortext[cor1]{Corresponding author}
\address[MPI]{Max-Planck-Institute f{\"u}r Informatik, Saarbr{\"u}cken, Germany}

\begin{abstract} We present a new algorithm for computing balanced flows in equality networks arising in market equilibrium computations. The current best time bound for computing balanced flows in such networks requires $O(n)$ maxflow computations, where $n$ is the number of nodes in the network [Devanur et al. 2008]. Our algorithm requires only a single parametric flow computation. The best algorithm for computing parametric flows [Gallo et al. 1989] is only by a logarithmic factor slower than the best algorithms for computing maxflows. Hence, the running time of the algorithms in [Devanur et al. 2008] and [Duan and Mehlhorn 2015] for computing market equilibria in linear Fisher and Arrow-Debreu markets improve by almost a factor of $n$.
\end{abstract}

\begin{keyword}
Algorithms, Network Flow, Market Equilibrium, Balanced Flow, Parametric Flow
\end{keyword}

\end{frontmatter}

\section{Introduction}
Balanced flows are an important ingredient in equilibrium computations for the linear Fisher and Arrow-Debreu markets. In the Arrow-Debreu market model, we have $n$ agents, where each agent owns some goods and a utility function over the existing set of goods. The goal is to assign prices to all the goods such that the market clears, meaning that all the goods are sold and each agent spends all of his budget on only goods with maximum utility. These prices are called the equilibrium prices and the state of the market with such prices is the equilibrium state we want to reach. Under some assumptions, it was shown in \cite{ArrowDebreu} that such prices exist. The Fisher model is a simplification of the Arrow-Debreu model, where each agent comes with a budget instead of the endowment of goods \cite{Fisher:PhD}. The linear Arrow-Debreu market is a special case where the utility functions are linear, and the same applies to the linear Fisher model. Later on, we will refer to agents as buyers. 

The problem of finding the equilibrium prices in these market models has been studied for a long time, see~\cite{DevanurGV13} and~\cite{Codenotti:2004} for surveys. The concept of a balanced flow was introduced in the first combinatorial algorithm~\cite{BalancedFlow} for finding the equilibrium prices in the linear Fisher market. Later, balanced flows were used in \cite{\DM} and~\cite{\DGM} for combinatorial algorithms for computing equilibrium prices in the linear Arrow-Debreu market.

The equilibrium algorithms mentioned above compute balanced flows in a special type of flow network, known as \textit{equality network}. An \textit{equality network} is a bipartite flow network where we have a set of buyers $B$ adjacent to the source node $s$ and a set of goods $C$ adjacent to the sink node $t$. Hence, the vertex set of the network is $\{s,t\} \cup B \cup C$. Also, we assume that the number of buyers is equal to the number of goods. Each buyer $b_i$ has a positive budget $e_i$ and each good $c_j$ has a positive price $p_j$. The edge set is defined as follows:
\begin{enumerate}
\item An edge $(s, b_i)$ with capacity $\mathit{e_i}$ for each $b_i \in B$. 
\item An edge $(c_i, t)$ with capacity $\mathit{p_i}$ for each $c_i \in C$. 
\item All edges running between $B$ and $C$ have infinite capacity. Also, we assume that each buyer is connected to at least one good and similarly each good is connected to at least one buyer.
\end{enumerate}
 A flow in this network corresponds to the money flow from buyers to goods. So, an amount of $x$ units flowing from buyer $b_i$ to good $\mathit{c_j}$ indicates that $b_i$ spends $x$ units on good $\mathit{c_j}$. We refrain from relating the equality network to the market equilibrium computation as this is not important for understanding our algorithm.

Balanced flows are defined for equality networks in~\cite{BalancedFlow}. We denote the set of buyers by $B = \{b_1, b_2, ..., b_n\}$. The surplus $r(b_i)$ of the buyer $b_i$ with respect to a maximum flow $f$ is the residual capacity of the edge $(s,b_i)$ (capacity of this edge minus the flow running through it). The surplus vector is the vector of the surpluses of all buyers $(r(b_1), r(b_2), ..., r(b_n))$. The balanced flow is the maximum flow that minimizes the two-norm of the surplus vector among all maximum flows of the network. In \cite{BalancedFlow}, it is shown that a balanced flow can be computed with $O(n)$ maxflow computations. This is the best time bound achieved up till now for computing balanced flows. 
 
The name \quotes{balanced flow} comes from the fact that it is the maximum flow in the equality network that balances the surpluses as much as possible. Additionally, it is shown in \cite{BalancedFlow} that the surplus vector is unique for all balanced flows in a given equality network. From this fact and other properties shown in \cite{BalancedFlow} and \cite{\DM}, we can deduce the following characterization of balanced flows in equality networks:
\begin{enumerate}
\item Buyers can be partitioned into maximal disjoint blocks based on their surpluses, $B_1, B_2, ..., B_h$ having surpluses $r_1 > r_2 > ... > r_h \geq 0$. Buyers in block $B_i$ have the surplus $r_i$. The uniqueness of this partition follows from the uniqueness of the surplus vector of the buyers.
\item We can assume without loss of generality that $r_h = 0$. 
\item Define $C_i$ to be the set of goods to which money flows from $B_i$. For $i < h$, all goods in $C_i$ are completely sold, i.e., the edges from $C_i$ to $t$ are saturated. There is no flow from $B_i$ to $\mathit{C_j}$ for $j < i$. Also, there are no edges from $B_i$ to $C_j$ for $j > i$.
\end{enumerate}

We show how to compute balanced flows in equality networks by a single parametric flow computation, which improves the running time of the algorithms in~\cite{BalancedFlow} and~\cite{\DM} for computing market equilibria in linear Fisher and Arrow-Debreu markets by almost a factor of $n$. Our algorithm adds to the applications of parametric flows mentioned in~\cite{ParametricFlow}.

In Section~\ref{PF}, we give an introduction to parametric network flows. In Section~\ref{Algo}, we state our algorithm for computing balanced flows, analyse its running time, and prove its correctness.

\section{Parametric Network Flows} \label{PF}
The parametric network flow problem~\cite{ParametricFlow} is a generalization of the standard network flow problem, in which the arc capacities are not fixed, but are functions of a real valued single parameter $\lambda$. More specifically, we consider the following parametric problem:
\begin{enumerate}
\item The arc $(s,b_i)$ has capacity $\max(0,e_i - \lambda)$.
\item The capacity of all other arcs is constant.
\item The parameter $\lambda$ decreases from a large value down to zero. 
\end{enumerate}

We define the \textit{min-cut capacity function} $\kappa(\lambda)$  as the capacity of a minimum cut in the network as a function of the parameter $\lambda$ \cite{ParametricFlow}. Among the cuts of capacity $\kappa(\lambda)$, let $(X(\lambda),\overline{X(\lambda)})$ be the cut with the smallest sink side\footnote{With respect to set inclusion, there is a unique minimum cut of capacity $\kappa(\lambda)$ with smallest sink side.} $\overline{X(\lambda)}$. It is shown in~\cite{ParametricFlow} that $X(\lambda) \subseteq X(\lambda')$ for $\lambda \ge \lambda'$. 

This implies that we have at most $n-1$ distinct minimum cuts. Under the assumption of linearity of the arc capacity functions of $\lambda$, $\kappa(\lambda)$ is a piecewise-linear concave function with at most $n-2$ breakpoints. A breakpoint is a value of $\lambda$ where the slope of $\kappa(\lambda)$ changes. 

As long as the minimum cut is the same, decreasing $\lambda$ will only result in increasing the flow linearly in the current segment of $\kappa(\lambda)$. \cite{ParametricFlow} gives an algorithm for computing all breakpoints of $\kappa(\lambda)$ which runs in $O(nm \log(n^2/m))$ worst-case time complexity, for a network with $n$ nodes and $m$ edges; they call it the \textit{breakpoint algorithm}. Additionally, they state that this algorithm can be easily augmented to store for each vertex $v \notin \{s,t\}$ the breakpoint at which $v$ moves from the sink side to the source side of a minimum cut of minimum sink size without altering the time bound. We will refer to the breakpoint algorithm with the stated augmentation as the \textit{augmented breakpoint algorithm}. We will use this \textit{augmented breakpoint algorithm} in our algorithm for computing balanced flows.

The algorithm in~\cite{ParametricFlow} extends the standard preflow algorithm by Goldberg and Tarjan~\cite{Goldberg-Tarjan} for solving the maximum flow problem. Its running time is only by the factor $\log(n^2/m)$ slower than the running time of the best maxflow algorithms~\cite{Cheriyan-Hagerup,King-Rao-Tarjan:MaxFlow,Orlin:MaxFlow}.

\section{Improved Algorithm for Balanced Flow Computation} \label{Algo}
In this section, we describe our algorithm for computing a balanced flow in a given equality network $N$ and prove its correctness. The intuition behind our algorithm is as follows. For $\lambda = \infty$, all edges out of $s$ have capacity zero. Since all edges into $t$ have positive capacity, the minimum cut will have all buyers on the sink side. Then, step by step, at $\lambda$ values corresponding to $r_1, ..., r_h$, the blocks $B_1, ..., B_h$ of buyers will move from the sink side of the minimum cut to the source side. The behavior of the function $\kappa(\lambda$) changes at these breakpoints, since at these surpluses some buyers will not be able to push more flow to the goods side. Hence the flow will stop at this point at these buyers, while continuing to increase for other buyers, and so on. An example for the evolution of the minimum cut is shown in Figure~\ref{fig:BalancedFlow}. We will next give the details.

\subsection{The Algorithm}
Our algorithm works on an equality network $N$ as follows:
\begin{enumerate}
\item Construct the parametric network $\mathit{PN}$ from $N$. First, construct $\mathit{PN}$ as a copy of $\mathit{N}$. Next, change the capacity of each edge $(s, b_i)$ in $\mathit{PN}$ from $e_i$ to $\max(0, e_i - \lambda)$. 
\item Run the \textit{augmented breakpoint algorithm} on $\mathit{PN}$, in order to find all the breakpoints of $\kappa(\lambda)$ for $\mathit{PN}$ and record for each buyer $b_i$ the breakpoint at which $b_i$ moves from the sink side to the source side of a minimum cut of minimum sink size; let it be $\lambda_i$ for each $b_i$. Note that, initially all buyers are on the sink side of the minimum cut as shown in Figure~\ref{fig:BalancedFlow}. This will be explained in details in the analysis. 
\item Construct the network $N'$ which is a copy of $\mathit{N}$, but with changing the capacity of each edge $(s, b_i)$ from $e_i$ to $e_i - \lambda_i$. Note that, we are setting the capacity of each edge running from $s$ to buyer $b_i$ to be $e_i$ minus the parameter value of the breakpoint where this buyer moves in the minimum cut from the sink side to the source side. This breakpoint of buyer $b_i$ is the surplus of this buyer in the balanced flow of $N$, as we will show in the analysis.
\item Compute the maximum flow in $N'$. It is a balanced flow of the initial network $N$.
\end{enumerate}

\subsection{Complexity and Correctness Analysis}
First, we analyze the worst-case time complexity of our algorithm. We can easily see that steps (1) and (3) run in time complexity equivalent to the size of the network, so steps (2) and (4) are dominating, in terms of time complexity. As explained in the previous section, step (2) runs in $O(nm \log(n^2/m))$ worst-case time complexity, for an equality network with $n$ nodes and $m$ edges. For step (4), it needs one maxflow computation, so it is dominated by step (2). Hence, the overall worst-case time complexity of our algorithm is $O(nm \log(n^2/m))$.\\

Now, we prove the correctness of the algorithm. Recall from the previously stated characterization of balanced flows that we have a unique partition of buyers $B_1, B_2, ..., B_h$ having surpluses $r_1 > r_2 > ... > r_h = 0$, where buyers in block $B_i$ have the surplus $r_i$. Also, recall that we define $C_i$ to be the set of goods to which money flows from $B_i$. Using this characterization, we can deduce the following lemmas.
\begin{lemma}
\label{MinCut}
For $i < h$ and $\lambda \in (r_{i+1}, r_i]$, the minimum cut of minimum sink size corresponding to $\kappa(\lambda)$ in the parametric network $\mathit{PN}$ is $(s \cup (B_1 \cup B_2 \cup ... \cup B_i) \cup (C_1 \cup C_2 \cup ... \cup C_i),$ $t \cup (B_{i+1} \cup B_{i+2} \cup ... \cup B_h) \cup (C_{i+1} \cup C_{i+2} \cup ... \cup C_h))$.
\end{lemma}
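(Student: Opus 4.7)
Write $X := \{s\} \cup B_1 \cup \cdots \cup B_i \cup C_1 \cup \cdots \cup C_i$ and let $\overline{X}$ be its complement. My plan has three steps: (i) verify $(X,\overline{X})$ is a feasible cut and compute its capacity; (ii) exhibit a maximum flow in $\mathit{PN}$ of the same value; (iii) verify that $\overline{X}$ is the smallest sink side among all minimum cuts at parameter $\lambda$. For step (i), the only infinite-capacity edges of $\mathit{PN}$ lie between $B$ and $C$, and property~3 of the characterization of balanced flows forbids edges from $B_j$ to $C_k$ with $k > j$; since any infinite arc from $X$ to $\overline{X}$ would run from $B_j$ with $j \le i$ to $C_k$ with $k > i$, and $k > i \ge j$, no such arc exists. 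The same property forces $C_1, \ldots, C_h$ to be pairwise disjoint, since a good in $C_i \cap C_j$ with $i < j$ would simultaneously require an edge from some $b \in B_i$ to a good in $C_j$. The capacity of the cut is therefore
\[
K(\lambda) \;:=\; \sum_{j > i}\sum_{b \in B_j} \max(0, e_b - \lambda) \;+\; \sum_{j \le i}\sum_{c \in C_j} p_c.
\]

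For step (ii), I start from the balanced flow $f$ of $N$ and define a flow $g$ on $\mathit{PN}$ by keeping $f$ unchanged on every $b \in B_1 \cup \cdots \cup B_i$, and for each $b \in B_j$ with $j > i$ scaling the incoming and outgoing flow at $b$ by $\alpha_b := \max(0, e_b - \lambda)/(e_b - r_j)$, so that $g(s,b)$ saturates the reduced capacity $\max(0, e_b - \lambda)$. Feasibility follows because for $j \le i$ we have $\lambda \le r_i \le r_j$, so $f(s,b) = e_b - r_j \le e_b - \lambda$, and because scaling preserves conservation at each buyer. The value of $g$ equals $K(\lambda)$ via the identity $\sum_{j \le i,\, b \in B_j}(e_b - r_j) = \sum_{j \le i,\, c \in C_j} p_c$, which holds because by property~3 and disjointness $B_{\le i}$ sends money only to $C_{\le i}$ and receives it all there, and because each $c \in C_j$ with $j \le i < h$ is saturated in $f$. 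Hence $(X,\overline{X})$ is a minimum cut and $g$ is a maximum flow.

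For step (iii), I use the standard characterization that the source side of the minimum cut with smallest sink side equals $V \setminus R_t(g)$, where $R_t(g)$ is the set of vertices that reach $t$ in the residual graph of $g$. To prove $R_t(g) = \overline{X}$, I check both containments. No residual arc leaves $X$ for $\overline{X}$: the arcs $(s,b)$ into $B_{>i}$ and the arcs $(c,t)$ out of $C_{\le i}$ are saturated by $g$; arcs $(b,c)$ with $b \in B_{\le i}$ and $c \in C_{>i}$ do not exist by property~3; and reverse arcs $(c,b)$ with $c \in C_{\le i}$ and $b \in B_{>i}$ would require $g(b,c) > 0$, but $b \in B_j$ sends flow only within its own block $C_j$, which is disjoint from $C_{\le i}$. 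Conversely, every $v \in \overline{X} \setminus \{t\}$ reaches $t$: for $c \in C_j$ with $j > i$ the arc $(c,t)$ has positive residual because scaling with $\alpha_b < 1$ strictly reduces the inflow at $c$ below $p_c$; and for $b \in B_j$ with $j > i$ the edge to some $c \in C_j$ (which carries flow in $f$) gives a forward residual arc to a vertex that reaches $t$.

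The subtlety I expect to require most care is the boundary case $\lambda = r_i$, where the edges $(s,b)$ for $b \in B_i$ are already saturated in $g$ and so $s$ does not reach $B_i$ in the residual graph of $g$. The lemma nevertheless holds because the smallest-sink-side cut is characterized by $V \setminus R_t(g)$ rather than by $R_s(g)$; the relevant property is that the vertices in $B_i \cup C_i$ cannot reach $t$, which still follows from the absence of residual arcs leaving $X$ established in step (iii).
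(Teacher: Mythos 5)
Your overall route is genuinely different from the paper's: instead of taking an arbitrary minimum cut of minimum sink size and improving it by exchange arguments, you exhibit the candidate cut, construct a matching feasible flow $g$ by scaling the balanced flow, and then identify the minimum-sink-side cut as $V\setminus R_t(g)$ via residual reachability. Steps (i) and (ii) are sound (modulo defining $\alpha_b:=0$ when $e_b=r_j$), the forward half of step (iii) (no residual arc leaves $X$) is correct, and your handling of the boundary case $\lambda=r_i$ is more careful than anything in the paper. The one genuine gap is in the converse half of step (iii), in the sentence claiming that every $b\in B_j$ with $j>i$ has ``the edge to some $c\in C_j$ (which carries flow in $f$).'' A buyer may spend nothing in the balanced flow (its surplus equals its entire budget), in which case no such $c$ exists; if moreover all of $b$'s neighbours lie in $C_{\le i}$, then $b$ has residual arcs only into $X$, so $b\notin R_t(g)$ and the minimum cut of minimum sink size puts $b$ on the \emph{source} side. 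This is not a presentational slip: the containment $B_{>i}\subseteq\overline{X}$ actually fails there. Concretely, take buyers $b_1,b_2,b_3$ with budgets $10,3,2$, goods $c_1,c_2,c_3$ with prices $5,1,2$, and edges $b_1c_1$, $b_1c_2$, $b_2c_1$, $b_3c_3$. The balanced flow sends $5$ and $1$ from $b_1$ to $c_1,c_2$, nothing from $b_2$, and $2$ from $b_3$ to $c_3$, giving surpluses $4>3>0$, blocks $B_j=\{b_j\}$, and $C_1=\{c_1,c_2\}$, $C_2=\emptyset$, $C_3=\{c_3\}$. For $i=1$ and $\lambda=3.5\in(r_2,r_1]$, the cut with source side $\{s,b_1,b_2,c_1,c_2\}$ has capacity $\max(0,2-\lambda)+p_1+p_2=6=\kappa(3.5)$ and a strictly smaller sink side than the cut asserted by the lemma.

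You should not feel obliged to repair this within your framework, because the paper's own proof breaks at exactly the corresponding step: its exchange argument moves $F=X\cap\overline{D}$ and the adjacent goods $M$ to the sink side and asks the reader to ``convince yourself that moving $F$ and $M$ to the sink gives a cut,'' but in the example above $M=\{c_1\}$ is also adjacent to the source-side buyer $b_1$, so the move cuts an infinite-capacity edge and does not yield a finite cut. In other words, your residual-reachability analysis has surfaced a degenerate case (a positive-surplus buyer that spends nothing and is adjacent only to goods of higher-surplus blocks) in which the lemma as stated, with the \emph{minimum} sink side convention, is false; the statement would need either an additional hypothesis excluding such buyers, or a switch to the minimum cut of minimum \emph{source} side, under which your argument (showing no vertex of $X$ reaches $t$, and that $s$ reaches every vertex of $\overline{X}$ for $\lambda<r_i$) could be adapted. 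As it stands, the cited sentence is the precise point where your proof, like the paper's, does not go through.
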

\begin{proof}
Let $(X,\overline{X})$ be a minimum cut of minimum sink size of capacity $\kappa(\lambda)$ and let $f$ be a balanced flow in $N$.

We will first prove that the set of buyers $D = B_1 \cup B_2 \cup ... \cup B_i \subseteq X$. Notice that $D$ consists of all buyers whose surplus with respect to $f$ is greater than or equal to $\lambda$. Assume for the sake of a contradiction that $K = D \cap \overline{X}$ is non-empty. Let $L$ be the set of goods in $\overline{X}$ that are connected by an edge to a buyer in $K$; buyers in $K$ may also be connected to goods in $C \cap X$ and $f$ may carry nonzero flow from $K$ to $C \cap X$. From the characterization of balanced flows, $f$ does not carry flow from $B \setminus D$ to $L$. Since $(X,\overline{X})$ is a cut, there can be no edges from buyers in $D \setminus K$ to $L$; note that such edges have infinite capacity. We conclude that in $f$ all flow to the goods in $L$ comes from the buyers in $K$. Since all goods in $L$ are completely sold with respect to $f$ (note that $i < h$), we conclude
\[   \sum_{c_j \in L} p_j \le \sum_{b_j \in K} (e_j - r_j) \le \sum_{b_j \in K} (e_j - \lambda),\]
where the last inequality follows from $\lambda \le r_i$. Thus moving $K \cup L$ from $\overline{X}$ to $X$ will not increase the capacity of the cut and will reduce the sink side of the cut. We have now shown that $D \subseteq X$. Since the edges from $D$ to $\cup_{j \le i} C_j$ have infinite capacity and each good in $\cup_{j \le i} C_j$ is incident to one such edge, we also have $\cup_{j \le i} C_j \subseteq X$. 

We will next prove that $\overline{D}  = B_{i+1} \cup B_{i+2} \cup ... \cup B_h = B \setminus D \subseteq \overline{X}$. Assume for the sake of a contradiction, that $F = X \cap \overline{D}$ is nonempty.  Let $M$ be the goods that are connected to $F$. Since the edges from $F$ to $M$ have infinite capacity, we have $M \subseteq X$, and hence the edges running from $M$ to $t$ contribute to the capacity of the cut. We know that the capacity of these edges is greater than or equal to $\sum_{b_j \in F}{e_j - r_j}$, as in the balanced flow the goods in $M$ receive $\sum_{b_j \in F}{e_j - r_j}$ units of flow from $F$. Since the surplus of each of the buyers in $F$ with respect to the balanced flow is less than $\lambda$, we have $\sum_{b_j \in F}{e_j - \lambda} < \sum_{b_j \in F}{e_j- r_j} \le \sum_{c_j \in M} p_j$. Thus moving $F$ and $M$ from the source to the sink side of the cut will reduce the capacity of the cut (convince yourself that moving $F$ and $M$ to the sink gives a cut). This proves that all buyers in $\overline{D}$ belong to the sink side of the minimum cut. Naturally, all goods adjacent to buyers in $\overline{D}$ without edges from other buyers will also be in the sink side, hence the sink side of the minimum cut contains the nodes $(t \cup (B_{i+1} \cup B_{i+2} \cup ... \cup B_h) \cup (C_{i+1} \cup C_{i+2} \cup ... \cup C_h))$. This concludes the proof.
\end{proof}

\begin{figure}[tb!]
  \centering
  \def\svgwidth{\columnwidth}
  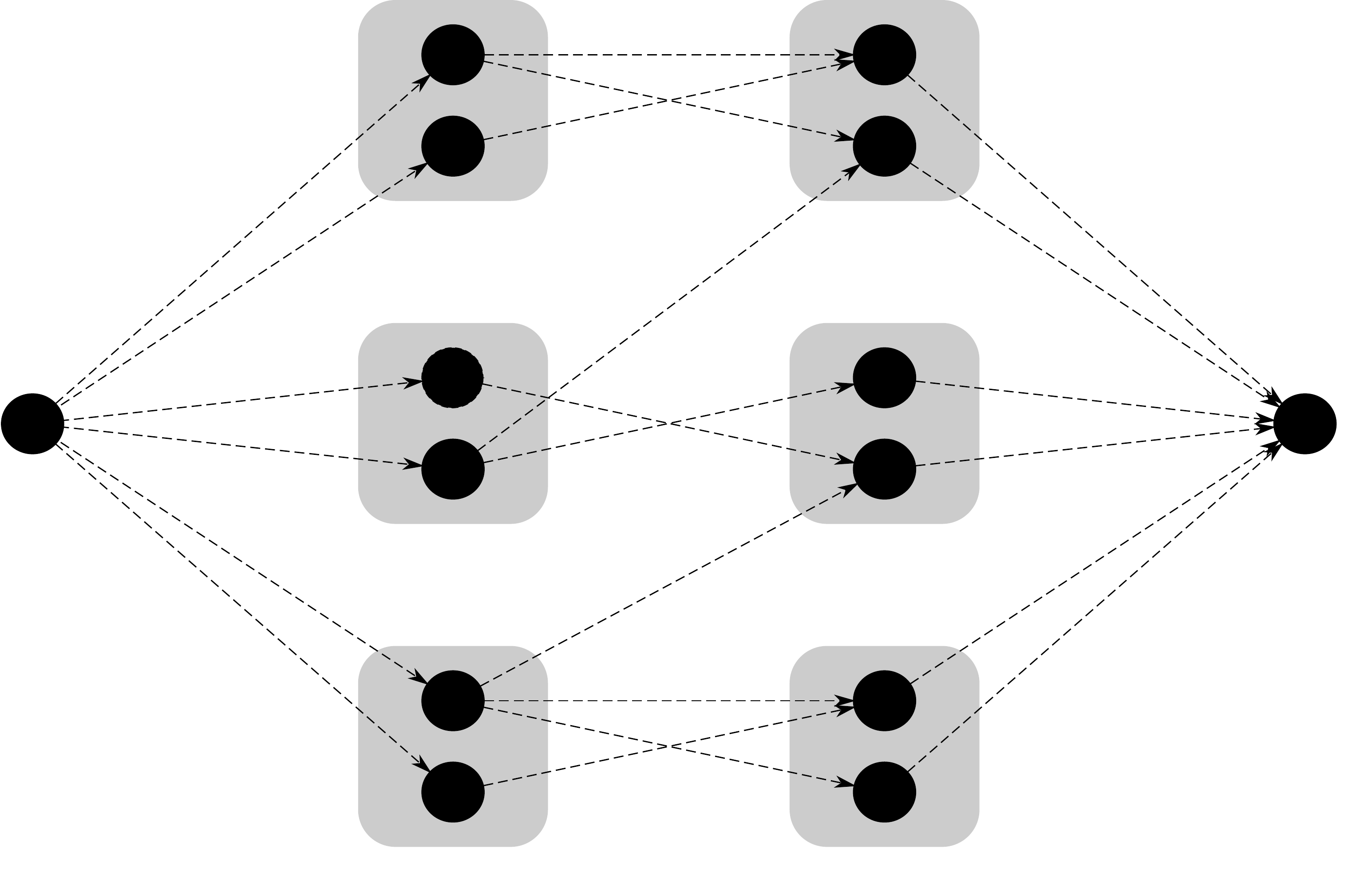
\caption{An example of an equality network in the form of a balanced flow partition. We have 3 blocks $B_1,$ $B_2,$ and $B_3$ of buyers having surpluses $r_1 > r_2 > r_3$, where buyers in block $B_i$ have surplus $r_i$. The figure shows how the minimum cut is changing throughout our algorithm based on the parameter value $\lambda$. Initially at $\lambda = \infty$, the minimum cut includes only $s$ in the source side. As $\lambda$ decreases, the minimum cut will change based on the surpluses. At $\lambda = r_1$, $B_1$ and $C_1$ move from the sink side to the source side, then similarly $B_2$ and $C_2$ move from the sink side to the source side at $\lambda = r_2$. The final minimum cut includes only $t$ in the sink side.}
\label{fig:BalancedFlow}
\end{figure}

\begin{lemma}
\label{breakpoints}
The breakpoints of $\kappa(\lambda)$ for the network $\mathit{PN}$ correspond to the surpluses of the blocks in any balanced flow of $N$.
\end{lemma}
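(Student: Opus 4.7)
The plan is to derive Lemma~\ref{breakpoints} directly from Lemma~\ref{MinCut} and the monotonicity $X(\lambda)\subseteq X(\lambda')$ for $\lambda\geq\lambda'$ quoted from~\cite{ParametricFlow}. Since $X(\lambda)$ determines, for each buyer $b_j$, the unique $\lambda_j$ at which $b_j$ migrates from sink to source, the task reduces to showing that $X(\lambda)$ grows by exactly $B_i\cup C_i$ at $\lambda=r_i$ for each $i$, and does not change otherwise.

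First I would apply Lemma~\ref{MinCut} to every interval $(r_{i+1},r_i]$ with $i\in\{1,\ldots,h-1\}$ to read off $X(\lambda)=\{s\}\cup B_1\cup\cdots\cup B_i\cup C_1\cup\cdots\cup C_i$, which is constant throughout the interval. Comparing consecutive intervals, exactly $B_i\cup C_i$ enters the source side as $\lambda$ crosses $r_i$ downward, so $\lambda_j=r_k$ for every $b_j\in B_k$ with $k<h$. For the upper boundary $\lambda>r_1$ I would reuse the second half of the proof of Lemma~\ref{MinCut} with $D=\emptyset$: its only prerequisite is $r_j<\lambda$ for each $b_j\in F$, which is valid because $\lambda>r_1\geq r_j$; this forces $X(\lambda)=\{s\}$, and hence $B_1\cup C_1$ first enters the source side precisely at $\lambda=r_1$. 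For the lower boundary $\lambda=r_h=0$, all $s$-capacities equal $e_i$ and $(\{s\}\cup B\cup C,\{t\})$ has capacity $\sum_j p_j$, which equals the value of the balanced flow since $r_h=0$; hence this cut is minimum and $B_h\cup C_h$ joins the source side at $\lambda=r_h$.

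The main obstacle is the upper boundary argument: Lemma~\ref{MinCut}'s proof has to be checked in the case $D=\emptyset$ and for potentially large $\lambda$, where some buyers may have $e_j<\lambda$ so that $(s,b_j)$ has capacity zero. Such buyers can be moved from a candidate $F$ to the sink side at no capacity cost, after which the argument of Lemma~\ref{MinCut} applies verbatim. A minor point worth noting is that $\kappa(\lambda)$ may have additional slope-change points inside $(r_{i+1},r_i)$ at values $\lambda=e_j$ where a source-side capacity transitions from $0$ to a positive value; since $X(\lambda)$ does not change there, these are not the breakpoints tracked by Step~(2) of the algorithm, so they do not disturb the claimed correspondence between the breakpoints at which buyers migrate and the surpluses $r_1,\ldots,r_h$.
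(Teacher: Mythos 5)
Your overall route is the same as the paper's: the paper's proof of Lemma~\ref{breakpoints} simply walks $\lambda$ down from infinity and reads the cut $X(\lambda)$ off Lemma~\ref{MinCut} on each interval $(r_{i+1},r_i]$, concluding that the migrations happen exactly at $r_1,\ldots,r_h$. You do the same, but more carefully: the paper never justifies that $X(\lambda)=\{s\}$ for $\lambda>r_1$ (Lemma~\ref{MinCut} only covers $\lambda\in(0,r_1]$), and your reuse of the second half of that proof with $D=\emptyset$, including the observation that buyers with $e_j\le\lambda$ contribute zero capacity so that positivity of the prices still yields a strict decrease, fills that hole correctly. Your remark about extra slope changes of $\kappa$ at $\lambda=e_j$ (where $\max(0,e_j-\lambda)$ kinks for a sink-side buyer) is also a genuine subtlety the paper ignores, and you resolve it the right way: what Step~(2) records is the migration value of each vertex, not every slope change of $\kappa$.

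There is, however, a flaw in your lower-boundary step. You claim that at $\lambda=r_h=0$ the cut $(\{s\}\cup B\cup C,\{t\})$ has capacity $\sum_j p_j$ equal to the value of the balanced flow. That is false in general: the characterization guarantees that goods in $C_i$ are completely sold only for $i<h$, so goods in $C_h$ may be undersold and $\sum_j p_j$ can strictly exceed the max-flow value. Concretely, with one buyer of budget $1$ joined to one good of price $2$ we have $h=1$, $r_1=0$, the max flow is $1$, and the minimum cut of minimum sink size at $\lambda=0$ is $(\{s\},\{b,c,t\})$ of capacity $1$, not $(\{s,b,c\},\{t\})$ of capacity $2$; the buyer never migrates to the source side. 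So $B_h\cup C_h$ need not join the source side at $\lambda=0$, and the correct statement is only that buyers in $B_h$ migrate at no positive value of $\lambda$, so that the algorithm must record $\lambda_j=0$ for them by convention (which still yields the correct capacity $e_j$ in $N'$). The paper glosses over this point too, but since you made an explicit (and incorrect) claim about that cut, the step needs to be repaired along these lines.
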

\begin{proof}
Let us decrease $\lambda$ from infinity to zero. We see from lemma \ref{MinCut} that the first breakpoint we will meet is $\lambda = r_1$. At this point all the buyers in $B_1$ will move in the minimum cut of minimum sink size corresponding to $\kappa(\lambda)$ from the sink side to the source side. Similarly, $\lambda = r_2$ will be our second breakpoint,  where all buyers from $B_2$ will move from the sink side to the source side of the minimum cut. The same argument will apply for all the breakpoints.
\end{proof}

\begin{theorem}
The algorithm computes a balanced flow of the equality network $N$, having $n$ nodes on each side and $m$ edges, in $O(nm \log(n^2/m))$ worst-case time complexity.
\end{theorem}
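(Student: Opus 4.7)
The plan is to dispatch the running time first and then settle correctness by combining the two preceding lemmas with a short feasibility-and-uniqueness argument. For the running time, nothing beyond the complexity discussion already in the section is needed: steps (1) and (3) take time linear in the size of $N$; step (2) is a single invocation of the augmented breakpoint algorithm, costing $O(nm\log(n^2/m))$ on a network whose total number of nodes is $O(n)$; and step (4) is one maxflow computation on a network of the same size, which by \cite{Goldberg-Tarjan,Cheriyan-Hagerup,King-Rao-Tarjan:MaxFlow,Orlin:MaxFlow} is dominated by the bound of step (2). Hence the overall time is $O(nm\log(n^2/m))$.

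For correctness, I would first identify the values $\lambda_i$ recorded in step (2). Each buyer $b_i$ belongs to exactly one block $B_j$ of the balanced-flow partition, and by Lemma~\ref{MinCut} $b_i$ lies on the sink side of the minimum cut of smallest sink size for every $\lambda > r_j$ and on the source side for every $\lambda \le r_j$. Combined with Lemma~\ref{breakpoints}, this yields $\lambda_i = r_j$, so $\lambda_i$ is exactly the surplus of $b_i$ in every balanced flow of $N$.

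Next, I would argue that the maximum flow of $N'$ is a balanced flow of $N$. Let $f$ be any balanced flow of $N$. The edge $(s,b_i)$ carries $e_i - r_j = e_i - \lambda_i$ units in $f$, which is precisely its capacity in $N'$; all other capacities are unchanged, so $f$ is also a feasible flow in $N'$. Hence the maximum flow $f^\ast$ of $N'$ satisfies $|f^\ast| \ge |f|$. On the other hand, $|f^\ast|$ is bounded above by the total capacity of the source edges, namely $\sum_i (e_i - \lambda_i) = |f|$. Therefore $|f^\ast| = |f|$, so (a) $f^\ast$ saturates every source edge in $N'$, and (b) viewed as a flow in $N$, $f^\ast$ is a maximum flow of $N$ whose surplus at $b_i$ equals $\lambda_i$, matching the balanced-flow surplus vector.

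Finally, by the uniqueness of the surplus vector for balanced flows in an equality network \cite{BalancedFlow}, any maximum flow of $N$ whose surplus vector coincides with that of a balanced flow achieves the minimum possible two-norm among maximum flows and is therefore itself a balanced flow. Applying this to $f^\ast$ completes the proof. The only step I consider non-routine is the equality $|f^\ast| = |f|$, since it is what lets us reinterpret the max flow in $N'$ as a max flow in $N$; this in turn hinges on $\lambda_i$ being exactly the balanced-flow surplus, which is what Lemmas~\ref{MinCut} and \ref{breakpoints} already supply.
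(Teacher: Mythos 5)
Your proof is correct and follows essentially the same route as the paper: running time from the section's complexity discussion, and correctness by using Lemmas~\ref{MinCut} and~\ref{breakpoints} to identify each $\lambda_i$ with the balanced-flow surplus, then invoking uniqueness of the surplus vector. In fact your argument that $|f^\ast|=|f|$ (via feasibility of a balanced flow in $N'$ and the source-capacity bound) supplies a justification for the paper's unproved assertion that the maximum flow of $N'$ saturates all source edges and is a maximum flow of $N$, so your write-up is, if anything, more complete.
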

\begin{proof}
We proved the worst-case time complexity in the beginning of this section. Concerning the correctness of the algorithm, we proved in Lemma \ref{breakpoints} that the breakpoints of $\kappa(\lambda)$ for the network $\mathit{PN}$ correspond to the surpluses of the blocks in any balanced flow of $N$. Hence, computing these surpluses will tell us the surplus of each buyer in the balanced flow network and that's what we do in steps (1) and (2) of the algorithm. Now, step (3) enforces these surpluses of each buyer by subtracting this surplus from the capacity of each edge running from the source to the buyer, constructing the network $N'$. So, when we compute the maximum flow in $N'$, all edges from the source node to the buyers will be saturated, hence each buyer will have the corresponding correct surplus. Also, this surplus vector is unique and corresponds to the balanced flow, hence this maximum flow in $N'$ is a balanced flow of $N$.
\end{proof}

\section*{Acknowledgements}
We would like to thank the anonymous reviewer for his/her constructive comments, in particular, with respect to Figure~\ref{fig:BalancedFlow}.

\end{document}